\newcommand{\bra}[1]{{\left\langle{#1}\right\vert}}
\newcommand{\ket}[1]{{\left\vert{#1}\right\rangle}}
\newtheorem{prop}{Proposition}
\newtheorem{lemma}[prop]{Lemma}
\journal{ArXiv}
\begin{document}

\begin{frontmatter}

\title{Single-Qudit Quantum Neural Networks for Multiclass Classification}

\author[inst1,inst2]{Leandro C. Souza}
\author[inst1,inst3]{Renato Portugal}

\affiliation[inst1]{organization={National Laboratory of Scientific Computing, LNCC},
            addressline={Av. Getulio Vargas, 333}, 
            city={Petr\'{o}polis},
            postcode={25651-075}, 
            state={RJ},
            country={Brazil}}

\affiliation[inst2]{organization={Universidade Federal da Paraíba, UFPB},
            addressline={Rua dos Escoteiros, s/n}, 
            city={Jo\~{a}o Pessoa},
            postcode={58051-900}, 
            state={PB},
            country={Brazil}}

\affiliation[inst3]{organization={Universidade Católica de Petrópolis, UCP},
            addressline={Rua Bar\~{a}o do Amazonas, 124}, 
            city={Petr\'{o}polis},
            postcode={25685-100}, 
            state={RJ},
            country={Brazil}}

\begin{abstract} 
This paper proposes a single-qudit quantum neural network for multiclass classification, by using the enhanced representational capacity of high-dimensional qudit states. Our design employs an $d$-dimensional unitary operator, where $d$ corresponds to the number of classes, constructed using the Cayley transform of a skew-symmetric matrix, to efficiently encode and process class information. This architecture enables a direct mapping between class labels and quantum measurement outcomes, reducing circuit depth and computational overhead. To optimize network parameters, we introduce a hybrid training approach that combines an extended activation function---derived from a truncated multivariable Taylor series expansion---with support vector machine optimization for weight determination. We evaluate our model on the MNIST and EMNIST datasets, demonstrating competitive accuracy while maintaining a compact single-qudit quantum circuit. Our findings highlight the potential of qudit-based QNNs as scalable alternatives to classical deep learning models, particularly for multiclass classification. However, practical implementation remains constrained by current quantum hardware limitations. This research advances quantum machine learning by demonstrating the feasibility of higher-dimensional quantum systems for efficient learning tasks.

\

\noindent
\textbf{Keywords:} Quantum Machine Learning, Quantum Neural Networks, Multiclass Classification, Parameterized Quantum Circuits, Single-Qudit Neural Networks 
\end{abstract}

\end{frontmatter}



\section{Introduction}\label{sec-intro}

Quantum Machine Learning (QML) merges quantum computing and machine learning to solve computational problems more efficiently than classical approaches~\cite{Biamonte2017,Wang2024,PERALGARCIA2024}. Using the principles of quantum mechanics, QML models can explore vast solution spaces simultaneously, offering the potential for speedups in tasks like optimization, pattern recognition, and data classification~\cite{Huang2021,Liu2024}. QML algorithms often rely on quantum circuits that map data into high-dimensional quantum states, enabling the discovery of patterns and relationships that may be intractable for classical systems. These capabilities are particularly useful for large-scale applications, including image recognition, natural language processing~\cite{Minaee2024}, and drug discovery~\cite{Dara2022}. Although still in its early stages, QML has shown promise in enhancing classical machine learning by improving data representation, accelerating computation, and introducing novel feature extraction techniques~\cite{Benedetti2019a,Huang2021a}.

Quantum Neural Networks (QNNs) are a class of QML algorithms inspired by classical neural networks, designed to exploit the computational advantages of quantum mechanics for learning tasks. QNNs use quantum states and operations to encode, process, and transform data in ways that classical systems cannot efficiently replicate~\cite{Schuld2014}. In classification tasks, QNNs encode input data into quantum states and apply parameterized quantum circuits to perform transformations, effectively learning decision boundaries in high-dimensional quantum space~\cite{Bokhan2022,Mordacci2024,Dhara2024,Ding2024a,Pillay2024}. The probabilistic nature of quantum measurements is used to classify data points into specific categories. This approach allows QNNs to tackle complex classification problems while potentially requiring fewer resources than classical neural networks due to their ability to process high-dimensional data more efficiently. QNNs are particularly promising for tasks involving large and complex datasets, offering improved scalability and performance in domains such as image recognition, anomaly detection, and natural language processing~\cite{Mari2020,Ngairangbam2022,Gujju2024}.

Parameterized Quantum Circuits (PQCs) are a foundational component of quantum machine learning, serving as flexible, trainable quantum models for various tasks~\cite{Benedetti2019,Carvalho2024}. A PQC consists of a series of quantum gates whose parameters can be adjusted to optimize the circuit's behavior, allowing it to learn patterns in data. PQCs are versatile and can approximate complex functions, making them a natural choice for tasks like classification, regression, and generative modeling~\cite{Ding2024}. In the context of QNNs, PQCs play a central role, as QNNs are often implemented using parameterized circuits where the gates act as the neurons in the network. Analogous to classical neural networks, the parameters of these gates depend on trainable weights that are updated during training to minimize a loss function. The flexibility of PQCs enables QNNs to use quantum properties to explore high-dimensional spaces and capture intricate data relationships. This synergy between PQCs and QNNs highlights their potential to provide quantum advantages in machine learning~\cite{Sim2019,Du2020,Schuld2021}.

A qudit is a fundamental unit of quantum information that generalizes the concept of a qubit~\cite{Wang_2020}. While a qubit operates in a two-dimensional quantum state space (spanned by states $\ket{0}$ and $\ket{1}$), a qudit extends this to an $d$-dimensional state space, represented by states $\ket{0}, \ket{1}, \dots, \ket{d-1}$. This higher-dimensional nature allows qudits to encode and process more information per unit, making them inherently more efficient for problems requiring richer data representation. In quantum machine learning, qudits provide a natural advantage for tasks like multiclass classification, where each class can correspond to one of the $d$ dimensions, simplifying the encoding and processing of multiclass datasets~\cite{Wach2023,RocaJerat_2024,Mandilara2024}. In contrast, qubits would require multiple entangled states or ancillary qubits to achieve the same functionality, potentially increasing circuit complexity and resource requirements. Recent experimental advancements in qudit-based quantum processors, including implementations with trapped ions~\cite{Nilolaeva2024} and photonic platforms~\cite{Chi2022}, further support the practical viability of qudit systems. By leveraging these higher-dimensional quantum units, quantum neural networks can offer more direct and scalable solutions for classifying data with multiple categories, enhancing the practical utility of quantum machine learning in real-world applications.

In this work, we propose a quantum neural network architecture based on neurons modeled with a single qudit, by exploiting the advantages of higher-dimensional Hilbert spaces and generalizing the concept of single-qubit models~\cite{Ghobadi2019,Karimi2023,Salinas2020,Salinas2021,Yu2022,McCaldin2024,McFarthing2024,Cuellar2024,Souza2024}. Unlike single-qubit models, which require additional resources for multi-dimensional inputs, the qudit-based neural network inherently supports them due to its $d$-dimensional state space. Furthermore, this framework provides a powerful and versatile architecture for multiclass classification tasks, efficiently encoding data into a single qudit state and applying parameterized unitary operations for transformation. By focusing on single-qudit neurons, this approach reduces the complexity of quantum circuits while maintaining the expressive power needed for high-dimensional data representation. The use of probabilistic activation functions through quantum measurements enhances the model’s suitability for multiclass classification problems. Additionally, employing qudits for multiclass classification minimizes hardware requirements and simplifies implementation, making this method particularly relevant for current and near-term quantum hardware.

The theoretical foundation of our model lies in the parameterization of unitary operators for single qudits, enabling flexible transformations of input data within the $d$-dimensional Hilbert space. For training, we employ an activation function based on a truncated multivariable Taylor series. To determine the weights, we use a regularized support vector machine method, which optimally separates classes while preventing overfitting. This approach effectively approximates complex relationships within the data by using a structured optimization framework that balances margin maximization and classification accuracy. Once the network has been trained, it runs on a quantum computer, but the qudit parameters are obtained using classical computations. Specifically, after training, the optimal weights are known, and the ansatz of the activation function is used to compute the parameter values that are passed to the qudit for execution. This hybrid approach ensures that the quantum model uses both classical optimization techniques and quantum state transformations efficiently.

We evaluated our model on the EMNIST dataset, achieving strong results in multiclass classification. The qudit-based neural network achieved high accuracy, particularly in digit classification, where it reached 98.88\% on the EMNIST Digits dataset and 98.20\% on the EMNIST MNIST dataset. Even for the more challenging EMNIST Letters dataset, the model maintained competitive performance, achieving 89.90\% accuracy. The results also indicate that reducing input dimensionality using Principal Component Analysis (PCA), specifically retaining 20–30 principal components, provides an optimal balance between accuracy and computational efficiency. These results demonstrate the scalability and adaptability of single-qudit neurons, emphasizing their practical potential in quantum machine learning.

The remainder of this paper is organized as follows. In Section~\ref{sec-neuron}, we introduce the qudit-based neuron model, detailing its mathematical formulation and quantum properties. Section~\ref{sec-qudit-NN} extends this framework to define the single-qudit quantum neural network, outlining its architecture and computational advantages. The training methodology for the proposed network is described in Section~\ref{sec-training}, where we introduce the activation function design, weight optimization, and class determination strategies. In Section~\ref{sec-emnist}, we present experimental results on the classification of the EMNIST dataset, analyzing the performance of the qudit-based neural network across different subsets. Section~\ref{sec-orig-mnist} further evaluates the model's capability by applying it to the original MNIST dataset. Finally, Section~\ref{sec-conc} provides concluding remarks and discusses potential directions for future research.

\section{Qudit-Based Neuron Model}\label{sec-neuron}

A qudit is a quantum system with a $d$-dimensional Hilbert space, where $d>2$, generalizing the concept of a qubit to accommodate more than two basis states. The state of a qudit can be expressed as a superposition of $d$ orthonormal basis states as 
$$\ket{\psi} = \sum_{i=0}^{d-1} c_i \ket{i},$$
where $c_i$ are complex coefficients satisfying $\sum_{i=0}^{d-1} |c_i|^2 = 1$. Operations on qudits are performed using $d \times d$ unitary matrices, which generalize qubit gates such as the Pauli and Hadamard gates. For example, a generalized $d$-dimensional Pauli-X gate, often called the $X_d$ gate, cyclically permutes the basis states: $X_d \ket{i} = \ket{(i+1) \bmod d}$. Similarly, a generalized $d$-dimensional Hadamard gate can be defined to create equal superpositions across all $d$ states.

Qudit-based quantum circuits require fewer logical units to encode the same amount of information as qubit-based circuits, reducing the overall circuit depth and complexity. Additionally, the ability to operate in higher-dimensional state spaces provides a natural framework for representing more complex data structures and enhances the expressiveness of quantum algorithms. These properties make qudits particularly advantageous in applications like error correction, where higher-dimensional codes can improve fault tolerance, and in quantum simulation, where they more efficiently model physical systems with intrinsic $d$-level behaviors.

A single qudit does not exhibit entanglement because entanglement is a property associated with correlations between multiple subsystems in a composite quantum system. However, the $d$-dimensional Hilbert space of a qudit can have analogous properties that are often related to quantum correlations. These include the capacity for complex superpositions and rich information structures, which can be exploited in quantum technologies. For instance, measures such as mutual information and entropic inequalities, which is often used to describe composite systems, can be adapted to characterize the informational and quantum properties of a single qudit, allowing it to serve as a resource in advanced quantum applications~\cite{Manko2014}.

A quantum neuron model based on a single qudit with $d$ levels is particularly well-suited for supervised learning tasks, especially those involving multiclass classification. Consider a dataset of the form $\left\{ \left({x}_{1},y_{1}\right),\left({x}_{2},y_{2}\right),\cdots,\left({x}_{n},y_{n}\right)\right\} \subset\mathbb{R}^{p}\times [-1,1],$ where each input ${x}_i \in \mathbb{R}^p$ is a feature vector, and each corresponding output $y_i \in [-1,1]$ represents the target value. The quantum neuron uses quantum computational principles to predict the output $y$ for a previously unseen input ${x} \in \mathbb{R}^p$. The input to the neuron is encoded into an arbitrary qudit state $\ket{\psi}$ in a $d$-dimensional Hilbert space. The neuron is then modeled by a parameterized $d \times d$ unitary matrix $U\big(\vec\theta\,\big)$, where $\vec\theta=(\theta_1,...,\theta_{d-1})$ is a vector of parameters that are adjusted during training. The quantum operation applies $U\big(\vec\theta\,\big)$ to the input state, resulting in the transformation $U\big(\vec\theta\,\big)\ket{\psi}$. A measurement in the computational basis is performed, yielding one of $d$ possible outcomes, which correspond to the predicted class labels. This structure makes the model naturally compatible with multiclass classification tasks, efficiently using the high-dimensional state space of the qudit.

Fig.~\ref{fig-percepton} illustrates the single-qudit neuron model, which is inherently well-suited for multiclass classification tasks. Assuming that there are $d$ classes, each class is mapped to one of the $d$ possible measurement outcomes, with the probability of outcome $\ell$ given by $|\bra{\ell}U\big(\vec\theta\,\big)\ket{\psi}|^2$ for $0\le \ell <d$. This probabilistic measurement process captures the model's confidence in its predictions, while the parameterized unitary matrix $U\big(\vec\theta\,\big)$ enables the neuron to learn complex decision boundaries in the high-dimensional quantum state space. By using the enhanced representational capacity of qudits, the qudit-based quantum neuron achieves efficient encoding and accurate classification of multiclass data.

\begin{figure}[H]
\centering\includegraphics[scale=0.19]{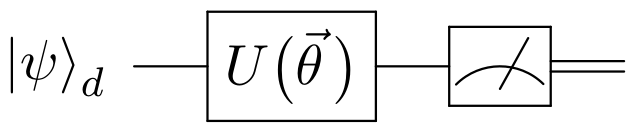}
\caption{Illustration of a qudit-based quantum neuron. The input can be an arbitrary $d$-dimensional quantum state $\ket{\psi}_d$, which undergoes a transformation via the parameterized unitary gate $U\big(\vec\theta\,\big)$. A measurement in the computational basis produces one of $d$ possible outcomes, corresponding to the neuron's prediction.}\label{fig-percepton} 
\end{figure}

Any $d$-dimensional unitary operator $U$ can represent a quantum neuron. It is known that $U$ has $d^2$ independent real parameters, as the unitary constraint imposes $d^2$ real constraints on the $2d^2$ real parameters of an arbitrary $d$-dimensional complex matrix. The unitary constraint ensures that each row (or column) of the matrix is orthonormal. However, there is no known algebraic parametrization for an arbitrary $d$-dimensional unitary operator. Alternatively, the Cayley transform~\cite{Bha97} provides a mapping between Hermitian matrices and unitary matrices, making it a practical method for constructing unitary matrices from Hermitian matrices, which are easier to generate. If $H$ is a Hermitian matrix, then the Cayley transform is given by
\begin{equation}\label{eq-U_H}
    U = (H-\textrm{i} I)(H+\textrm{i} I)^{-1}
\end{equation}  
where $U$ is guaranteed to be unitary. Furthermore, if $H = \textrm{i} A$, where $A$ is a real skew-symmetric matrix, the resulting $U$ is an orthogonal matrix.

In this work, we use a specific unitary operator $U$, which is an orthogonal matrix parameterized by $d-1$ real parameters $\theta_k$, where $k$ ranges from 1 to $d-1$. Since $U$ is derived from a Hermitian matrix $H = \textrm{i} A$, we now describe the entries of the skew-symmetric matrix $A$. The non-zero entries of $A$ are confined to its first row and first column. As $A$ is skew-symmetric, the diagonal entry $A_{11}$ is zero. To fully define $A$, it suffices to describe its first row. The entries of the first row are determined using auxiliary expressions defined as
\begin{align}\label{eq-c-and-s}
\begin{split}
c_\ell &=\cos\theta_{d-\ell}, \\
s_\ell &= \prod_{k=1}^{d-\ell} \sin \theta_k,
\end{split}
\end{align}
for $\ell$ ranging from 1 to $d$, with the convention that $c_0=s_d = 1$. Using these auxiliary expressions, the entries of the first row are given by
\begin{equation}\label{eq-A}
    A_{1\ell}=\frac{s_\ell \, c_{\ell-1}}{s_1-1},
\end{equation}
for $\ell$ ranging from 2 to $d$.

In our experiments, we always use $\ket{0}$, the first vector of the computational basis, as the input to the neuron because the encoding comes through the parameters. Then, the output of the neuron before measurement is $U\ket{0}$, which corresponds to the first column of the unitary matrix $U$. In~\ref{app:A}, we show that
\begin{equation}\label{eq-psi}
U\ket{0} \,=\, \sum_{\ell=0}^{d-1} s_{\ell+1} c_{\ell}\ket{\ell}.
\end{equation}
After measurement, the probabilities of the outcomes correspond to the squared coefficients of the computational basis vectors. The probability of classifying the input into the first class is $s_1^2$, the probability for the second class is $s_2^2\cos^2\theta_{d-1}$, for the third class $s_3^2\cos^2\theta_{d-2}$, and so on.

For practical implementations, it is important to consider the current limitations of quantum hardware, which predominantly support qubit-based systems. While this work focuses on the theoretical advantages and applications of qudit-based models,~\ref{app:B} shows how to construct a qubit-based circuit that can simulate the state transformations and measurement outcomes of the single-qudit neuron discussed above. This approach ensures compatibility with existing quantum platforms while preserving the key features of the qudit model.

\section{Single-Qudit Quantum Neural Network}\label{sec-qudit-NN}

The simplest single-qudit quantum neural network is constructed by streamlining qudit-based neurons, each parameterized by $d-1$ parameters. In this configuration, the network's input is an arbitrary $d$-dimensional quantum state $\ket{\psi}_d$, which is sequentially transformed by a series of parameterized unitary operations, $U_\ell\big(\vec{\theta}^{(\ell)}\big)$, where $\vec{\theta}^{(\ell)} = (\theta^{(\ell)}_1, \dots, \theta^{(\ell)}_{d-1})$ represents the set of parameters for the $\ell$-th layer, with $\ell = 1, \dots, L$. Due to the streamlined design, the number of layers is equal to the number of neurons. The network's output is obtained through a measurement in the computational basis, yielding one of $d$ possible outcomes. Figure~\ref{fig-QuditQNN} illustrates this architecture.

\begin{figure}[H]
\centering\includegraphics[scale=0.19]{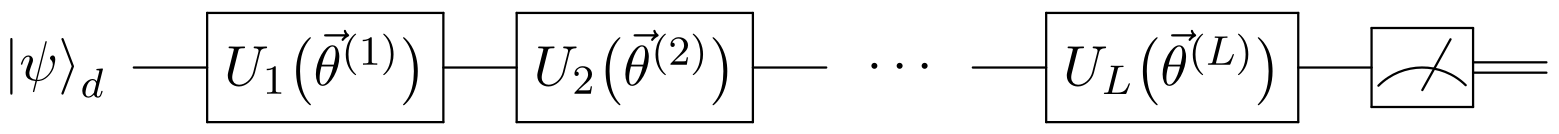}
\caption{Depiction of a single-qudit quantum neural network with $L$ layers. The input is a $d$-dimensional quantum state $\ket{\psi}_d$, which is sequentially transformed by parameterized unitary gates $U_\ell\big(\vec{\theta}^{(\ell)}\big)$ for $\ell = 1, \cdots, L$. A measurement in the computational basis at the end of the network produces one of $d$ possible outcomes, corresponding to the network's prediction.}\label{fig-QuditQNN} 
\end{figure}

Each unitary gate $U_\ell\big(\vec{\theta}^{(\ell)}\big)$ is a parameterized $d \times d$ matrix that performs arbitrary transformations in the $d$-dimensional Hilbert space. These gates can be implemented using generalizations of rotation gates or controlled operations for qudits. The trainable parameters $\vec{\theta}^{(\ell)}$ allow the network to adapt to the underlying data distribution, learning complex patterns for classification and regression tasks. By extending the parameterization to include functions of the input data, the network gains additional flexibility, enabling it to effectively model nonlinear relationships and enhance its expressive power in quantum machine learning.

The structure of the single-qudit quantum neural network is defined by the product of the unitary matrices across its $L$ layers
\[
U_\text{network} = U_L\big(\vec{\theta}^{(L)}\big) \cdots \,U_2\big(\vec{\theta}^{(2)}\big) \,U_1\big(\vec{\theta}^{(1)}\big),
\]
which represents the cumulative transformation applied to the input state $\ket{\psi}_d=\ket{0}$. The depth $L$ of the network determines its expressive power, allowing the model to approximate increasingly complex functions as $L$ increases. The matrix $U_\text{network}$ is an effective $d$-dimensional unitary transformation, which can be parameterized with $d-1$ parameters by combining all ansatze used in $\vec{\theta}^{(\ell)}$, for $\ell$ ranging from 1 to $L$.

The model’s inherent capability for multiclass classification arises from the qudit's dimensionality $d$, where each measurement outcome naturally corresponds to a class. This eliminates the need for additional qubits or post-processing steps required in qubit-based networks to encode multiclass labels. Moreover, the higher-dimensional structure of qudits reduces overall network complexity by allowing each neuron to process more information.

\section{Training Method}\label{sec-training}

Training a single-qudit quantum neural network involves systematically improving its performance on specific tasks by identifying patterns in a dataset. This process is guided by optimizing the network’s parameters, which control the behavior of the unitary gates within the model. The primary objective is to minimize a loss function that quantifies the discrepancy between the network’s predictions and the true outcomes in the training data. To achieve this, we employ a regularized Support Vector Machine (SVM) optimization, which determines the optimal weights while preventing overfitting. Unlike purely gradient-based learning methods, this approach provides a structured optimization framework that ensures efficient training while leveraging the high-dimensional state space of qudits. The ultimate goal is to train a network that generalizes well to unseen data, capturing the essential structure and relationships within the dataset.

\subsection{Activation Function Design}

The network circuit is parameterized by a set of trainable parameters, denoted as $\vec\theta = (\theta_1, \dots, \theta_{d-1})$ for each neuron. Let $\theta$ represent a generic parameter or component of $\vec\theta$ in the network. To implement the learning algorithm, each $\theta$ is considered as a function mapping from $\mathbb{R}^p$ to $\mathbb{R}$, representing a trainable weighted function derived from the dataset inputs. This approach, known as data re-uploading, has been explored in works such as Refs.~\cite{Salinas2020, Souza2024}. 

In Ref.~\cite{Salinas2020}, the authors propose an ansatz of the form $U(\vec\phi + \vec w \circ \vec x)$ for re-uploading classical information $\vec x$, where $\vec w$ is a tunable weight vector and $\vec\phi$ is a tunable parameter acting as a bias. Id this ansatz, the operation between $\vec w$ and the input data $\vec x$ is the Hadamard (elementwise) product. The neural network used in this reference is based on a single qubit, meaning that an arbitrary unitary operator in $SU(2)$ has at most three degrees of freedom, modulo a global phase. As a result, the input dimension is limited to a maximum of three. To address this limitation, Ref.~\cite{Salinas2020} proposes partitioning the $p$-dimensional vector $\vec{x}$ into approximately $p/3$ smaller vectors by increasing the number of layers in the network. Using qudits instead of qubits reduces the number of required partitions because qudits can encode more information per unit.

In contrast, Ref.\cite{Souza2024} adopts a different data re-uploading strategy. Instead of using the Hadamard product, their ansatz employs the dot product, expressed in its simplest form as $\cos\theta(\vec{x}) = \tanh(w_0 + \vec{w} \cdot \vec{x})$, where $\theta$ is an entry of $\vec{\theta}$. Note that the dot product does not restrict the dimensionality of the input vector. This ansatz closely resembles the one universally employed in the classical perceptron~\cite{rosenblatt1958}, suggesting that each quantum neuron is associated with a classical perceptron. The perceptron performs calculations on a classical device before delivering the parameter value to the quantum neuron.

Ref.~\cite{Souza2024} further generalized this ansatz for a single-qubit quantum network with $L$ layers by introducing the activation function 
\begin{equation}\label{eq-theta-}
\cos \theta(\vec x) = \tanh\left(w_0 + \vec w_1 \cdot \vec x +  \vec w_2 \cdot (\vec x\circ \vec x) + \cdots + \vec w_L \cdot \vec x^{\,\circ L}\right),
\end{equation}
where $\vec x^{\,\circ L}$ represents the Hadamard product applied $L$ times. The reasoning behind this ansatz is based on two key notions: (i) a sequential arrangement of quantum neurons is mathematically equivalent to an effective single neuron, and (ii) incorporating powers of the input data entries is inspired by a truncated Taylor series expansion. Eq.~\eqref{eq-theta-} was successfully employed for binary classification tasks across various synthetic and real datasets in Ref.~\cite{Souza2024}.

In this work, we focus on addressing multiclass classification by using the multi-output capabilities of qudit measurements. We adopt a strategy similar to the one used in Ref.~\cite{Souza2024}, parameterizing the effective $U_\text{network}$ with only $d-1$ parameters $\theta_1, \dots, \theta_{d-1}$, as described in Eq.\eqref{eq-U_H} with $H = \textrm{i} A$, where the entries of $A$ are defined by Eq.~\eqref{eq-A}. The parameters depend on the input data in a nonlinear way, with the degree of non-linearity bounded by $L$, which represents the number of neurons. This approach avoids directly uploading classical information to the quantum neurons. Instead, a nonlinear classical ``perceptron'' performs numerical computations on the input data and provides the optimal parameter values to the effective quantum neuron.

Unfortunately, Eq.~\eqref{eq-theta-} did not yield satisfactory results in our experiments with EMNIST and MNIST. To address this limitation, we propose an extended ansatz for each parameter $\theta(x_i)$, where $x_i$ is a feature vector, defined by a new degree-$L$ activation function as 
\begin{equation}\label{eq-ansatz}
\sin\theta(x_i) = \sigma\left( w_0 + \sum_{j=1}^p w_{j} x_{ij} + \sum_{j=1}^p \sum_{k=j}^p w_{j k} x_{ij}x_{ik} + \sum_{j=1}^p \sum_{k=j}^p\sum_{\ell=k}^p w_{j k \ell} x_{ij}x_{ik}x_{i\ell}+ \cdots\right),
\end{equation}
such that the last term inside the sigmoid function~$\sigma$ includes $L$ summations, with the coefficients represented as trainable weights indexed by $L$ labels. The remaining weights are indexed with a number of labels varying from 1 to $L-1$. Using basic combinatorial principles, the total number of trainable weights in Eq.~\eqref{eq-ansatz} is given by $\binom{L+p}{p}$.

This ansatz extends the one in Eq.~\eqref{eq-theta-} by incorporating different input data entries, ensuring that the maximum degree corresponds to the number of layers. The hyperparameter $L$, which represents the number of layers or neurons, serves a similar role as in Eq.~\eqref{eq-theta-}. The formulation is inspired by a truncated multivariable Taylor series expansion. By capturing correlations among input features more effectively, this approach enhances the model performance.

\subsection{Weight Determination}

SVM provides a robust approach for weight optimization by determining an optimal hyperplane that maximizes the margin between different classes. For a given dataset $\{(x_i, y_i)\}_{i=1}^{n}$, where $x_i \in \mathbb{R}^p$ represents the input feature vectors and $y_i \in \{-1,1\}$ are the corresponding class labels, the decision function is given by 
\begin{equation}\label{eq-y-i}
y_i = w \cdot x_i + b,
\end{equation}
where $w \in \mathbb{R}^p$ is the weight vector and $b \in \mathbb{R}$ is the bias term. The optimal weights and bias are obtained by solving the regularized convex optimization problem  
\begin{equation}\label{eq-SVM}
\min_{w, b} \frac{1}{2} \|w\|^2 + C \sum_{i=1}^{n} \xi_i
\end{equation}
subject to the constraints  
\begin{equation}
y_i\, (w\cdot x_i + b) \geq 1 - \xi_i, \quad \xi_i \geq 0, \quad \forall i \in \{1, \dots, n\}.
\end{equation}
Here, $\xi_i$ are slack variables that allow for misclassification, and $C > 0$ is a hyperparameter that controls the trade-off between maximizing the margin and minimizing classification errors. This formulation, known as the soft-margin SVM, ensures robustness against outliers while maintaining good generalization performance. In this work, we adopt the regularized SVM method to determine the optimal weights for the qudit-based quantum neural network, replacing the linear least squares approach used in other references such as~\cite{Souza2024}.

\subsection{Class Determination}

The output of the quantum network is a state vector whose entries are given by $s_{j+1} \cos \theta_{d-j}$ for $0 \leq j < d$, as described in Eqs.~\eqref{eq-c-and-s} and~\eqref{eq-psi}. Consequently, the probability of measuring the system in ``class $j$'' is determined by  
\begin{equation}
p_j = \left( \prod_{k=1}^{d-j-1} \sin^2 \theta_k \right) \cos^2 \theta_{d-j},
\end{equation}
where, by convention, $p_{d-1} = \cos^2 \theta_1$.  

The classification process follows a sequential training approach, starting with the last class, $d-1$, whose associated probability is $p_{d-1}$. When $\theta_1$ is close to zero, $p_{d-1}$ approaches 1, meaning that the probabilities of all other classes are negligible. Since $\theta_1(x_i)$ follows Eq.~\eqref{eq-ansatz}, the weights are determined using Eq.~\eqref{eq-SVM}, where the class labels are converted into two distinct values: $-1$ if the label is $d-1$, and $+1$ otherwise. We then re-scale the computed weights by mapping $-1$ to $-\infty$ and $+1$ to $+\infty$, ensuring that the output of the activation function $\sigma$ is close to either 0 or 1. This transformation effectively forces large negative elements for samples labeled $d-1$, making $\sin^2 \theta_1$ close to zero and increasing $\cos^2 \theta_1$, which maximizes the probability of the network returning class $d-1$. In our experiments, we confirmed that scaling factors such as 100 were sufficiently large for this effect to hold.

After determining $\theta_1$, we proceed to $\theta_2$ using a similar approach. Specifically, setting $\theta_2$ close to zero ensures that the probability $p_{d-2}$ approaches 1, while the probabilities of all other remaining classes become negligible. At this stage, data points associated with class $d-1$ are removed from training, as they are no longer relevant, and $\theta_1$ is set to $\pi/2$. When applying Eq.~\eqref{eq-SVM} again, we exclude all data inputs corresponding to label $d-1$ and continue the training process to determine $\theta_2$. This sequential classification continues until all parameters $\theta_j$ are computed, allowing the model to assign the correct class to each input.

While the training process above assumes a standard correspondence between class labels and measurement results, this mapping can be further optimized. Since the entries of the output vector~\eqref{eq-psi} are not symmetrically distributed in terms of sine and cosine factors, choosing an appropriate class association can improve classification accuracy. The number of sine factors is maximal for the first entry and decreases linearly, reaching zero for the last entry. Conversely, each entry contains exactly one cosine factor, except for the first entry, which has none. Given this structure, an optimized assignment of class labels to measurement outcomes can enhance classification accuracy.

Before training, we determine the optimal mapping between class labels and measurement outcomes. The initial assumption that class $d-1$ is associated with the training of $\theta_1$, where a small $\theta_1$ leads to a high probability $p_{d-1}$, is not necessarily optimal. Instead, we systematically evaluate all possible class-to-measurement assignments and select the one that minimizes classification error, measured by the Hinge loss function. The following algorithm determines the optimal mapping between class labels and measurement results.

The process begins by determining the class corresponding to the training of $\theta_1$. For each class $j$ from 0 to $d-1$, we convert all class labels into a binary format: $y_i' = -1$ if the label corresponds to $j$ and $y_i' = +1$ otherwise. The network is then trained with these modified labels, and the associated weights are computed. Next, we calculate the Hinge loss function, given by
\begin{equation}
\ell_j(y') = \sum_{i=1}^{n} \max(0, 1 - y'_i \hat{y}'_i).
\end{equation}
This process is repeated for all classes, and the class $j$ that yields the smallest Hinge loss is selected as the one associated with $\theta_1$. This ensures that when $\theta_1$ is close to zero, $p_j$ is maximized while all other probabilities remain low. The number of iterations in this step is $d$.

After determining $\theta_1$, we proceed with $\theta_2$ by removing all data entries corresponding to the class previously assigned to $\theta_1$. We then set $\theta_1 = \pi/2$, which modifies the output vector so that only $d-1$ entries remain nonzero, with the last entry being zero and the second-to-last entry being $\cos\theta_2$. The goal now is to determine which class $j'$ should be associated with $\theta_2$ so that when $\theta_2$ is close to zero, $p_{j'}$ is maximized. To achieve this, we repeat the same procedure as before, iterating over the remaining $d-1$ classes, converting labels into binary form, training the network, and computing the Hinge loss. The class corresponding to the smallest Hinge loss is assigned to $\theta_2$. The number of iterations in this step is $d-1$.

This process continues sequentially, determining the optimal class assignment for each $\theta_j$ until all classes are assigned. The final label-to-measurement mapping is obtained using an algorithm that runs in $O(d^2)$ time.

\section{Classification of EMNIST Datasets}\label{sec-emnist}

The EMNIST dataset is a significant extension of the widely used MNIST dataset, designed to include handwritten letters and provide a more comprehensive benchmark for machine learning models. Introduced by Cohen et al.~\cite{Cohen2017}, EMNIST was created by extracting and preprocessing a subset of the NIST Special Database 19,\footnote{\url{https://nvlpubs.nist.gov/nistpubs/Legacy/IR/nistir5959.pdf}} a large collection of handwritten characters collected by the United States National Institute of Standards and Technology (NIST). This original NIST dataset contains a diverse set of handwritten digits and letters obtained from different writers, making it a more challenging and representative dataset than MNIST, which was derived from a more homogeneous subset of NIST data. EMNIST preserves the structure of MNIST by retaining the same 28$\times$28 grayscale image format, ensuring compatibility with models and tools developed for the original dataset. However, unlike MNIST, which only includes digits, EMNIST introduces uppercase and lowercase letters, significantly increasing the dataset's complexity and making it a valuable resource for evaluating models in a broader range of classification tasks.

The EMNIST dataset is divided into six distinct subsets, categorized into two main groups based on their structure and sample distribution. The \textit{By\_Class} and \textit{By\_Merge} subsets represent the raw datasets derived directly from the NIST Special Database 19. The \textit{By\_Class} subset includes all handwritten digits and letters, treating uppercase and lowercase letters as separate classes, resulting in 62 classes in total. In contrast, the \textit{By\_Merge} subset also contains digits and letters but merges uppercase and lowercase letters, reducing the number of classes to 47.

The remaining four subsets are designed to ensure an equal number of samples per class, making them more balanced for training and evaluation purposes. The \textit{Balanced} subset includes 47 classes, merging uppercase and lowercase letters and maintaining an even distribution of samples across all classes. The \textit{Digits} subset consists exclusively of the 10 numerical digits (0--9) and closely mirrors the original MNIST dataset in composition. The \textit{Letters} subset focuses solely on handwritten letters, with uppercase and lowercase versions merged, resulting in 26 classes. Finally, the \textit{EMNIST MNIST} subset is identical to the original MNIST dataset, providing full compatibility for benchmarking against traditional machine learning models. These structured subsets allow for targeted exploration of different classification tasks, making EMNIST a highly versatile and challenging dataset for machine learning and pattern recognition research.

To reduce the dimensionality of the input data while preserving essential features, we applied Principal Component Analysis (PCA) to the EMNIST images. Each image, originally represented as a 784-dimensional vector (28$\times$28 pixels), was projected onto a lower-dimensional space using the leading eigenvectors of the covariance matrix. Specifically, we retained the top 10 to 40 principal components, capturing the most significant variance in the dataset. This dimensionality reduction improved training efficiency while minimizing redundant information. For performance evaluation, we employed 10-fold cross-validation ($K=10$), ensuring that the model was trained and tested on multiple data splits for a robust accuracy estimate. The impact of PCA and cross-validation on classification performance is reflected in the results presented in the following sections. No additional image preprocessing beyond PCA was applied, ensuring that the classification results reflect the raw dataset without modifications.

In the following sections, we analyze the performance of the qudit-based neural network on each dataset individually. For each subset, we present classification results, highlighting accuracy, computational efficiency, and the impact of different network configurations. The implementation was carried out using Python and scikit-learn, and all experiments were conducted on a desktop PC with an AMD Ryzen 9 5950X 16-Core 3.4GHz processor and 128GB of RAM.

\subsection{Balanced Dataset}

The Balanced dataset is designed to provide a balanced and streamlined resource for handwritten letters and digits recognition tasks. It is intended to be the most widely applicable dataset among the other datasets in EMNIST~\cite{Cohen2017}. Specifically, the Balanced dataset contains 131,600 samples, evenly distributed across 47 classes: 10 digits (0-9) and 37 letter classes, where certain similar uppercase and lowercase letters (like 'C' and 'c') are merged to reduce confusion and simplify classification. This balanced split, with approximately 2,800 samples per class (2,400 for training and 400 for testing), ensures uniform representation, making it an ideal benchmark for training and evaluating machine learning models on a set of handwritten characters.

\begin{table}[H]
\centering
\begin{tabular}{|c|c|c|c|c|}
\hline
\rowcolor{gray!20} \textbf{Components} & \textbf{Neurons} & \textbf{Weights} & \textbf{Accuracy (\%)} & \textbf{Time (s)} \\
\hline
\multirow{3}{*}{\textbf{10}} & 1 & 11 & 28.22 (0.35) & 37.35 \\
                             & 2 & 66 & 60.29 (0.29) & 232.05 \\
                             & 3 & 286 & 68.75 (0.49) & 1341.98 \\
\hline
\multirow{3}{*}{\textbf{20}} & 1 & 21 & 48.30 (0.23) & 59.93 \\
                             & 2 & 231 & 78.76 (0.21) & 744.94 \\
                             & 3 & 1771 & 81.15 (0.23) & 7591.03 \\
\hline
\multirow{3}{*}{\textbf{30}} & 1 & 31 & 53.07 (0.21) & 84.79 \\
                             & 2 & 496 & 81.26 (0.23) & 1519.69 \\
                             & 3 & 5456 & 81.50 (0.34) & 21021.78 \\
\hline
\end{tabular}
\caption{Results for the Balanced dataset. }
\label{tab:balanced_results}
\end{table}

Table~\ref{tab:balanced_results} presents the classification accuracy and computation time for different numbers of principal components and neurons. The number of neurons is the value of $L$ in Eq.~\eqref{eq-ansatz}. The results indicate that increasing the number of components improves accuracy, with a significant gain from 10 to 20 components, while the improvement from 20 to 30 components is marginal. Similarly, increasing the number of neurons consistently enhances performance, with the largest gain occurring from 1 to 2 neurons, whereas the improvement from 2 to 3 neurons is less pronounced. However, this increased complexity comes at the cost of significantly higher computational time, particularly when using three neurons, where the number of weights quickly grows. The highest accuracy (81.50\%) is obtained with 30 components and 3 neurons, but at a high computational cost (21021.78s), while a similar accuracy (81.26\%) is achieved with 20 components and 2 neurons in significantly less time (1519.69s). This suggests that using 20 components and 2 neurons provides the best trade-off between accuracy and efficiency, as further increasing the model complexity yields only minor improvements in performance while drastically increasing computation time.

\subsection{Digits Dataset}

The Digits dataset is a subset of the EMNIST dataset that exclusively contains handwritten digits (0-9). It is designed as an extension of the original MNIST dataset but is derived from a different source, ensuring greater diversity in handwriting styles. This subset provides a larger and more representative sample of digit images, making it particularly useful for evaluating machine learning models in digit classification tasks. The images retain the standard 28×28 grayscale format, ensuring compatibility with models trained on MNIST. By focusing solely on numerical characters, the EMNIST Digits dataset serves as an effective benchmark for assessing the accuracy and robustness of quantum neural networks in handwritten digit recognition.

\begin{table}[H]
\centering
\begin{tabular}{|c|c|c|c|c|}
\hline
\rowcolor{gray!20}  \textbf{Components} & \textbf{Neurons} & \textbf{Weights} & \textbf{Accuracy (\%)} & \textbf{Time (s)} \\
\hline
\multirow{3}{*}{\textbf{10}} & 1 & 11  & 82.13 (0.66) & 6.87  \\
                             & 2 & 66  & 92.70 (0.16) & 42.86  \\
                             & 3 & 286 & 94.87 (0.16) & 251.93  \\
\hline
\multirow{3}{*}{\textbf{20}} & 1 & 21  & 88.52 (0.18) & 12.08  \\
                             & 2 & 231 & 97.54 (0.09) & 105.99 \\
                             & 3 & 1771 & 98.43 (0.08) & 521.11  \\
\hline
\multirow{3}{*}{\textbf{30}} & 1 & 31  & 90.19 (0.18) & 16.54  \\
                             & 2 & 496 & 98.37 (0.11) & 182.30 \\
                             & 3 & 5456 & 98.88 (0.07) & 1449.55  \\
\hline
\end{tabular}
\caption{Results for Digits dataset}
\label{tab-Digits}
\end{table}

Table~\ref{tab-Digits} presents the classification results for the Digits dataset, evaluating different numbers of principal components and neurons. The results show a consistent improvement in accuracy as the number of components increases, with a notable gain from 10 to 20 components, while the improvement from 20 to 30 components is relatively minor. Similarly, increasing the number of neurons enhances performance, with the most significant jump occurring from 1 to 2 neurons, whereas the gain from 2 to 3 neurons is less pronounced. However, this increase in model complexity also leads to higher computational costs, particularly when using three neurons, where the number of weights and training time grow substantially. The highest accuracy (98.88\%) is achieved with 30 components and 3 neurons, but this configuration requires 1449.55s of training time. A similar accuracy (98.43\%) is obtained with 20 components and 3 neurons in just 521.11s, making it a more computationally efficient choice. These results suggest that using 20 components and 3 neurons provides the best balance between accuracy and training efficiency, as further increasing the number of components yields only marginal gains while significantly increasing computational time.

\subsection{Letters Dataset}

The Letters dataset is a subset of the EMNIST dataset that focuses exclusively on handwritten letters, making it a valuable benchmark for character recognition tasks. It consists of 26 classes corresponding to the English alphabet, where uppercase and lowercase versions of the same letter are merged to reduce ambiguity and improve classification consistency. The dataset retains the standard 28×28 grayscale image format, ensuring compatibility with digit classification datasets while introducing greater complexity due to the variations in letter shapes and styles. This subset provides a challenging test for quantum neural networks, as letter recognition typically involves more intricate patterns than digit classification, requiring models to capture subtle differences between similar characters.

\begin{table}[H]
\centering
\begin{tabular}{|c|c|c|c|c|}
\hline
\rowcolor{gray!20}  \textbf{Components} & \textbf{Neurons} & \textbf{Weights} & \textbf{Accuracy (\%)} & \textbf{Time (s)} \\
\hline
\multirow{3}{*}{\textbf{10}} & 1 & 11  & 41.64 (0.29) & 13.54  \\
                             & 2 & 66  & 69.04 (0.50) & 89.33  \\
                             & 3 & 286 & 77.13 (0.30) & 567.21 \\
\hline
\multirow{3}{*}{\textbf{20}} & 1 & 21  & 54.28 (0.55) & 22.40  \\
                             & 2 & 231 & 85.34 (0.29) & 283.83 \\
                             & 3 & 1771 & 89.09 (0.23) & 2388.24 \\
\hline
\multirow{3}{*}{\textbf{30}} & 1 & 31  & 58.51 (0.45) & 30.36  \\
                             & 2 & 496 & 88.31 (0.21) & 1055.44 \\
                             & 3 & 5456 & 89.90 (0.24) & 7067.53 \\
\hline
\end{tabular}
\caption{Results for Letters dataset}
\label{tab-Letters}
\end{table}

Table~\ref{tab-Letters} presents the classification results for the Letters dataset, analyzing the effect of different numbers of principal components and neurons on accuracy and computation time. The results indicate a significant improvement in accuracy as the number of components increases, particularly when moving from 10 to 20 components, whereas the gain from 20 to 30 components is relatively small. Increasing the number of neurons also enhances performance, with the most substantial improvement occurring when transitioning from 1 to 2 neurons, while the gain from 2 to 3 neurons is less pronounced. However, this improvement comes at the cost of increased training time, which grows rapidly with the number of weights, particularly for three-neuron configurations. The highest accuracy (89.90\%) is achieved using 30 components and 3 neurons, but this setup requires a training time of 7067.53s. A slightly lower accuracy (89.09\%) is obtained with 20 components and 3 neurons in significantly less time (2388.24s), making it a more efficient choice. These results suggest that using 20 components and 3 neurons provides the best trade-off between accuracy and computational cost, as further increasing the number of components leads to only marginal performance improvements while substantially increasing training time.

\subsection{EMNIST MNIST Dataset}

The EMNIST MNIST dataset is a subset of the EMNIST dataset designed to closely resemble the original MNIST dataset while being derived from the larger NIST Special Database 19, maintaining the same structure and characteristics as MNIST. This subset serves as a benchmark for digit classification, allowing direct comparisons between classical and quantum models on a well-established dataset. Unlike other subsets of EMNIST, which include letters or merged character classes, the EMNIST MNIST dataset focuses exclusively on digit recognition, making it a valuable reference point for evaluating the performance of quantum neural networks on numerical classification tasks.

\begin{table}[H]
\centering
\begin{tabular}{|c|c|c|c|c|}
\hline
\rowcolor{gray!20}  \textbf{Components} & \textbf{Neurons} & \textbf{Weights} & \textbf{Accuracy (\%)} & \textbf{Time (s)} \\
\hline
\multirow{4}{*}{\textbf{10}} & 1 & 11  & 81.04 (0.44) & 1.81  \\
                             & 2 & 66  & 92.46 (0.27) & 6.42  \\
                             & 3 & 286 & 94.58 (0.22) & 43.79 \\
\hline
\multirow{4}{*}{\textbf{20}} & 1 & 21  & 88.46 (0.30) & 2.38  \\
                             & 2 & 231 & 97.24 (0.23) & 17.47 \\
                             & 3 & 1771 & 97.71 (0.18) & 216.79\\
\hline
\multirow{4}{*}{\textbf{30}} & 1 & 31  & 90.06 (0.25) & 2.93  \\
                             & 2 & 496 & 97.80 (0.20) & 32.51 \\
                             & 3 & 5456 & 98.12 (0.19) & 303.89\\
\hline
\multirow{4}{*}{\textbf{40}} & 1 & 41  & 91.24 (0.32) & 3.46  \\
                             & 2 & 861 & 97.79 (0.18) & 53.96 \\
                             & 3 & 12341 & 98.20 (0.17) & 628.53 \\
\hline
\end{tabular}
\caption{Results for the EMNIST MNIST dataset}
\label{tab-emnist-mnist}
\end{table}

Table~\ref{tab-emnist-mnist} presents the classification results for the EMNIST MNIST dataset, evaluating different numbers of principal components and neurons in terms of accuracy and computation time. The results show a consistent increase in accuracy as the number of components grows, with a noticeable improvement from 10 to 20 components, while the gains from 20 to 40 components are relatively small. Similarly, increasing the number of neurons enhances classification performance, with the most significant improvement occurring when moving from 1 to 2 neurons, whereas the gain from 2 to 3 neurons is more modest. The computational time also increases with model complexity, but compared to other datasets, the training time remains relatively low. The highest accuracy (98.20\%) is achieved with 40 components and 3 neurons, but a similar accuracy (98.12\%) is obtained with 30 components and 3 neurons in about half the computation time (303.89s vs. 628.53s). Given these results, the configuration with 30 components and 3 neurons provides an optimal balance between accuracy and computational efficiency, as further increasing the number of components only marginally improves performance while significantly increasing training time.

\subsection{Comparison}

In this subsection, we compare the performance of our single-qudit quantum neural network with the OPIUM classifier on the EMNIST dataset~\cite{Cohen2017}. The results are summarized in Table~\ref{table-results-summary}. The values in this table were taken from the previous tables, except for the \textit{By\_Class} and \textit{By\_Merge} subsets, which were obtained using 30 principal components and 2 neurons. Additionally, we include the accuracy obtained using Eq.~\eqref{eq-theta-}, which defines an alternative activation function based on the standard Taylor expansion, as introduced in Ref.~\cite{Souza2024}. In this equation, we made a minor modification by replacing $\tanh$ with the sigmoid function and $\cos$ with $\sin$. The results obtained using Eq.~\eqref{eq-theta-} were generated with 30 principal components and 3 neurons, the same configuration used for the results based on Eq.~\eqref{eq-ansatz}.

This analysis aims to highlight the advantages of our approach in terms of classification accuracy and efficiency. Since our model uses the high-dimensional state space of qudits for multiclass classification, it is crucial to evaluate its effectiveness against a well-established classical classifier. The results demonstrate that the qudit-based neural network, when using the activation function based on the multivariable Taylor expansion, consistently outperforms OPIUM, particularly in the classification of handwritten digits and letters. It achieves higher accuracy while maintaining a compact model structure.

\begin{table}[H]
\centering
\begin{tabular}{lllrlr}
\toprule
Dataset& OPIUM      & Standard      & Time & Multivariable & Time  \\ 
Name & Classifier & Taylor series & (sec.) & Taylor series & (sec.) \\ 
& Accuracy (\%) & Accuracy (\%) &  & Accuracy (\%) &  \\ 
\midrule
By\_Class     & $69.71 \pm 1.47$ & $67.26\pm 0.16$ & $2319.50$ & $\mathbf{82.05\pm 0.15}$ & $17091.98$ \\
By\_Merge     & $72.57 \pm 1.18$ & $69.51 \pm 0.13$ & $1320.36$ & $\mathbf{85.65 \pm 0.10}$ & $10020.33$ \\
Balanced     & $78.02 \pm 0.92$ & $63.34 \pm 0.35$ & $282.10$ & $\mathbf{81.50 \pm 0.34}$ & $21021.78$ \\
Letters      & $85.15 \pm 0.12$ & $69.38 \pm 0.28$ & $114.80$ & $\mathbf{89.90 \pm 0.24}$ & $7067.53$ \\
{\small EMNIST MNIST} & $96.22 \pm 0.14$ & $93.62 \pm 0.21$ & $10.75$ & $\mathbf{98.20 \pm 0.17}$ & $628.53$ \\
Digits       & $95.90 \pm 0.40$ & $92.84 \pm 0.16$ & $48.22$ & $\mathbf{98.88 \pm 0.07}$ & $1449.55$ \\
\bottomrule
\end{tabular}
\caption{Comparison of the accuracy of the OPIUM classifier, the standard Taylor series expansion (based on Eq.~\eqref{eq-theta-}), and the multivariable Taylor series expansion (based on Eq.~\eqref{eq-ansatz}) on the EMNIST dataset, including computation times.}
\label{table-results-summary}
\end{table}

The comparison between the Standard-Taylor-Series and Multivariable-Taylor-Series methods reveals a significant performance difference across all dataset splits. The Multivariable-Taylor-Series method consistently achieves higher accuracy than the Standard-Taylor-Series method, with improvements ranging from approximately 5\% for simpler datasets like EMNIST MNIST to over 20\% for more complex datasets like Balanced and Letters. This suggests that the Multivariable-Taylor-Series approach provides a more expressive representation, capturing nonlinear dependencies in the data more effectively.

However, this increase in accuracy comes at the cost of significantly higher computation time. The training time for the Multivariable-Taylor-Series method is considerably longer, particularly in the Balanced and Letters datasets, where it requires more than 20,000 seconds and 7,000 seconds, respectively. In contrast, the Standard-Taylor-Series method achieves lower accuracy but runs much faster, making it a more efficient alternative when computational resources are limited. These results highlight the trade-off between accuracy and computational cost when selecting an activation function for single-qudit quantum neural networks. While the Multivariable-Taylor-Series method provides superior classification performance, the Standard-Taylor-Series method remains a viable option in scenarios where computational efficiency is a priority.

\section{Classification Performance on MNIST Dataset}\label{sec-orig-mnist}

The original MNIST dataset is one of the most widely used benchmarks for handwritten digit classification in machine learning. It consists of 70,000 grayscale images of digits from 0 to 9, each represented as a 28$\times$28 pixel array. The dataset is divided into a training set of 60,000 images and a test set of 10,000 images, providing a standardized evaluation framework for different classification models. Like EMNIST, MNIST was derived from the NIST Special Database 19 with the goal of facilitating the development and evaluation of machine learning algorithms. Its simplicity, balanced class distribution, and relatively small size make it an ideal testbed for both classical and quantum-based neural networks. In this section, we apply our qudit-based quantum neural network to the original MNIST dataset and analyze its classification performance.

\begin{table}[H]
\centering
\begin{tabular}{|c|c|c|c|c|}
\hline
\rowcolor{gray!20}  \textbf{Components} & \textbf{Neurons} & \textbf{Weights} & \textbf{Accuracy (\%)} & \textbf{Time (s)} \\
\hline
\multirow{3}{*}{\textbf{10}} & 1 & 11  & 77.28 (0.51) & 1.79  \\
                             & 2 & 66 & 90.36 (0.22) & 6.61  \\
                             & 3 & 286 & 92.69 (0.16) & 33.41 \\
\hline
\multirow{3}{*}{\textbf{20}} & 1 & 21 & 85.50 (0.39) & 2.42  \\
                             & 2 & 231 & 96.20 (0.18) & 17.43 \\
                             & 3 & 1771 & 97.10 (0.13) & 118.29 \\
\hline
\multirow{3}{*}{\textbf{30}} & 1 & 31 & 87.53 (0.38) & 3.00  \\
                             & 2 & 496 & 97.02 (0.14) & 33.83 \\
                             & 3 & 5456 & 97.58 (0.18) & 312.93 \\
\hline
\end{tabular}
\caption{Results for classification performance on the original MNIST dataset. The table reports accuracy, number of neurons, number of weights, and computational time for different numbers of principal components.}
\label{tab-multivariate}
\end{table}

The results presented in Table~\ref{tab-multivariate} indicate that the classification performance of the single-qudit quantum neural network on the MNIST dataset improves as the number of principal components and neurons increases. Notably, a significant accuracy gain is observed when transitioning from 10 to 20 principal components, with the best-performing model achieving 97.10\% accuracy with 20 components and three neurons. Increasing the number of principal components to 30 yields a marginal improvement, reaching 97.58\% accuracy, but at the cost of substantially higher computational time. The results also suggest that a two-neuron configuration with 20 or 30 principal components provides a favorable trade-off between accuracy and efficiency, as it achieves near-optimal performance with significantly lower training time compared to the three-neuron setup. These findings align with previous experiments on EMNIST, reinforcing the effectiveness of leveraging qudits for multiclass classification while balancing computational resources and model complexity.

Several state-of-the-art CNN architectures have achieved near-perfect classification accuracy on the MNIST dataset, with some models reaching 99.84\% or higher~\cite{Mazzia2021,BYERLY2021,HIRATA2023}. These high-performing models employ sophisticated preprocessing techniques to enhance digit representation before training. Common preprocessing steps include normalization, deskewing, and data augmentation, which improve generalization by standardizing input distributions and introducing additional training variations. These techniques reduce distortions and enhance feature extraction, ultimately improving classification accuracy. The role of image preprocessing in CNN-based models highlights its importance in optimizing learning performance, a factor that may also impact quantum-based approaches to digit classification. In contrast, all experiments in this work were conducted on the EMNIST and original MNIST datasets without any preprocessing beyond PCA, allowing us to evaluate the raw performance of the single-qudit quantum neural network.

\section{Conclusion and Future Work}\label{sec-conc}

In this work, we introduced a single-qudit quantum neural network designed for multiclass classification tasks. By using the high-dimensional state space of qudits, our approach efficiently encodes and processes data while reducing circuit complexity compared to traditional qubit-based models. The proposed architecture employs a parameterized unitary transformation to model quantum neurons and a measurement-based classification strategy that directly maps outcomes to class labels. Additionally, the design of an activation function based on a truncated multivariable Taylor expansion enables more expressive transformations of input data, leading to improved classification performance.

We evaluated the model on the EMNIST and MNIST datasets, demonstrating that it achieves competitive classification accuracy while maintaining a compact quantum circuit structure. Our results suggest that using 30 to 40 principal components provides an optimal balance between accuracy and computational efficiency.

Despite these promising results, practical implementation faces challenges due to the current limitations of quantum hardware. While qudit-based quantum processors, such as those based on trapped ions and photonic systems, are emerging, most available platforms still rely on qubits. Future work should aim to bridge this gap by refining qubit-based approximations of qudit neurons, optimizing gate decompositions, and developing hybrid quantum-classical strategies for improved trainability and scalability. Additionally, further validation of the proposed model on real quantum hardware will be essential for assessing its practical feasibility.

As quantum computing technology advances, the integration of qudit-based architectures into quantum machine learning frameworks has the potential to improve efficiency and scalability in classification tasks. The development of noise-resilient qudit circuits and improved training techniques will be crucial to realizing the full advantages of qudit-based quantum neural networks in practical applications.

Future work should also explore the application of single-qudit quantum neural networks to regression tasks, extending their utility beyond classification. The proposed model’s ability to process high-dimensional data within a compact quantum circuit suggests potential advantages in continuous-valued predictions. Investigating how qudit-based QNNs approximate smooth functions and evaluating their expressive power in regression problems could provide deeper insights into their computational capabilities. Additionally, adapting the hybrid training strategy to optimize loss functions suited for regression, such as mean squared error (MSE), could further enhance the practical applicability of this approach. These extensions would contribute to a broader understanding of quantum neural networks in machine learning, particularly in the context of quantum data representation and function approximation.

\section*{Acknowledgements}

The work of L.~C.~Souza was supported by CNPq grant number 302519/2024-6.
The work of R.~Portugal was supported by FAPERJ grant number CNE E-26/200.954/2022, and CNPq grant numbers 304645/2023-0 and 409552/2022-4.

\section*{Declaration of competing interest}

The authors declare that they have no known competing financial interests or personal relationships that could have appeared to
influence the work reported in this paper.

\section*{Data availability}

Publicly available datasets were analyzed in this study. The datasets can be found here: \begin{itemize}
    \item https://www.nist.gov/itl/products-and-services/emnist-dataset;
    \item https://deepai.org/dataset/mnist.
\end{itemize}

\appendix

\section{}\label{app:A}
In this Appendix, we prove Eq.~\eqref{eq-psi}. The $d$-dimensional orthogonal matrix $U$ is defined as
\[ U = (A-I)(A+I)^{-1}, \]
where $A$ is a skew-symmetric matrix with non-zero entries given by
\[
A_{1\ell}=-A_{\ell 1}=\frac{s_\ell \, c_{\ell-1}}{s_1-1},
\]
where $c_\ell =\cos\theta_{d-\ell}$ and \vspace*{-10pt}
\begin{align*}
s_\ell = \prod_{k=1}^{d-\ell} \sin \theta_k,
\end{align*}
for $\ell$ ranging from 2 to $d$, with the convention that $s_d = 1$. Our goal is to show that
\begin{equation}\label{eq-Uket0}
U\ket{0} = \sum_{\ell=1}^{d} s_{\ell} c_{\ell-1}\ket{\ell-1},
\end{equation}
where we define $c_0 = 1$.

\begin{lemma}\label{lemma1}
Using the notation $c_{\ell,d}$ and $s_{\ell,d}$ to explicitly indicate the dimension of the Hilbert space, we have
\[
\sum_{\ell=1}^{d} s_{\ell,d}^2 c_{\ell-1,d}^2 \,=\,1.
\]
\end{lemma}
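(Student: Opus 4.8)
The plan is to prove the normalization identity $\sum_{\ell=1}^{d} s_{\ell,d}^2 c_{\ell-1,d}^2 = 1$ by induction on $d$, exploiting the nested-product structure of the $s_\ell$ factors. First I would unpack the definitions: with $c_{\ell,d} = \cos\theta_{d-\ell}$ and $s_{\ell,d} = \prod_{k=1}^{d-\ell}\sin\theta_k$ (so $s_{d,d}=1$, $c_{0,d}=1$), the first term of the sum is $s_{1,d}^2 = \prod_{k=1}^{d-1}\sin^2\theta_k$ and the last term is $s_{d,d}^2 c_{d-1,d}^2 = \cos^2\theta_1$. The key structural observation is that for $\ell \geq 1$, $s_{\ell,d} = \sin\theta_1 \cdot s_{\ell,d-1}'$ where the inner quantities are built from the shifted angle set $\{\theta_2,\dots,\theta_{d-1}\}$; more precisely, one should set up the induction so that the sum over $\ell = 1,\dots,d$ splits off either the first term or the last term and leaves a copy of the $(d-1)$-dimensional identity.

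Concretely, I would separate the last summand $c_{d-1,d}^2 = \cos^2\theta_1$ and observe that every remaining term ($\ell = 1,\dots,d-1$) contains the factor $\sin^2\theta_1$, since $s_{\ell,d}$ for $\ell \leq d-1$ includes $\sin\theta_1$ in its product (as $d-\ell \geq 1$). Factoring $\sin^2\theta_1$ out of those $d-1$ terms should leave exactly $\sin^2\theta_1 \sum_{\ell=1}^{d-1} s_{\ell,d-1}^2 c_{\ell-1,d-1}^2$ where now the $s,c$ are formed from the reduced parameter set — here one must carefully check the index bookkeeping so that $s_{\ell,d}/\sin\theta_1$ and $c_{\ell-1,d}$ match $s_{\ell,d-1}$ and $c_{\ell-1,d-1}$ under the relabeling $\theta_k \mapsto \theta_{k+1}$. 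By the induction hypothesis that inner sum equals $1$, giving $\sin^2\theta_1 + \cos^2\theta_1 = 1$. The base case $d=2$ is just $\sin^2\theta_1 + \cos^2\theta_1 = 1$ (or one could even take $d=1$ trivially).

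The main obstacle I anticipate is purely notational: the definitions index $c_\ell$ by $\theta_{d-\ell}$, so reducing the dimension from $d$ to $d-1$ shifts which $\theta$'s appear, and one has to be scrupulous that the "first row entries" and the convention $s_d=1$, $c_0=1$ survive the relabeling correctly — in particular that peeling off $\sin\theta_1$ (the outermost factor, common to all nontrivial $s_\ell$) rather than $\sin\theta_{d-1}$ is the right move, and that the leftover terms genuinely reindex to the lower-dimensional sum without an off-by-one error. An alternative, perhaps cleaner, route avoids induction entirely: write the partial sums telescopically using the identity $s_{\ell,d}^2 c_{\ell-1,d}^2 = s_{\ell,d}^2 - s_{\ell-1,d}^2$ for suitable indexing — since $s_{\ell-1,d}^2 = s_{\ell,d}^2 \sin^2\theta_{d-\ell+1}$ and $c_{\ell-1,d}^2 = \cos^2\theta_{d-\ell+1} = 1 - \sin^2\theta_{d-\ell+1}$, each summand is a difference of consecutive squared $s$-values, so the whole sum telescopes from $s_{d,d}^2 = 1$ down to $s_{0,d}^2 = \prod_{k=1}^d \sin^2\theta_k$ minus a boundary term; checking the endpoints then closes it. I would present whichever of these turns out to have the least index-chasing, likely the telescoping argument.
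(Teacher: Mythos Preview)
Your proposal is correct, and both routes you sketch work. The inductive argument is close in spirit to the paper's but not identical: the paper also inducts on $d$ with the same base case, but it peels from the \emph{front} of the sum rather than the back. Specifically, the paper groups the first two summands, verifies that $s_{1,d}^2 + s_{2,d}^2 c_{1,d}^2 = s_{2,d}^2$ (which is just $s_{1,d} = s_{2,d}\sin\theta_{d-1}$ together with $c_{1,d}=\cos\theta_{d-1}$), and then uses the \emph{literal} identities $s_{\ell+1,d} = s_{\ell,d-1}$ and $c_{\ell,d} = c_{\ell-1,d-1}$ to shift indices and recover the $(d-1)$-dimensional sum over the same angles $\theta_1,\dots,\theta_{d-2}$ --- no relabeling needed. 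Your version instead factors $\sin^2\theta_1$ out of the first $d-1$ terms and peels off $\cos^2\theta_1$ at $\ell=d$, reducing to a $(d-1)$-problem over the \emph{shifted} angles $\theta_2,\dots,\theta_{d-1}$; hence the relabeling you correctly flag as the main bookkeeping burden. The paper's direction buys the cleaner ``no relabeling'' reduction; yours has the more transparent $\sin^2\theta_1 + \cos^2\theta_1$ closing step.

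Your telescoping alternative is genuinely different from the paper and is the slickest of the three. The identity $s_{\ell,d}^2 c_{\ell-1,d}^2 = s_{\ell,d}^2 - s_{\ell-1,d}^2$ holds for $\ell \geq 2$ (since $s_{\ell-1,d} = s_{\ell,d}\sin\theta_{d-\ell+1}$), and the $\ell=1$ term is $s_{1,d}^2 c_{0,d}^2 = s_{1,d}^2$ by the convention $c_0=1$, so the whole sum collapses to $s_{1,d}^2 + (s_{d,d}^2 - s_{1,d}^2) = 1$ in one line. One caution: your phrasing ``telescopes \dots\ down to $s_{0,d}^2 = \prod_{k=1}^d\sin^2\theta_k$'' is slightly off, since there is no $\theta_d$ --- it is precisely the convention $c_0=1$ (not a putative $s_{0,d}$) that handles the $\ell=1$ endpoint. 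You anticipated the endpoint check, and once carried out correctly the argument is immediate and avoids induction entirely.
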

\begin{proof}
The proof proceeds by induction on $d$. For $d=2$, we have
\[
\sum_{\ell=1}^{2} s_{\ell,2}^2 c_{\ell-1,2}^2 \,=\,\sin^2 \theta_1+\cos^2\theta_1=1.
\]
For arbitrary $d > 2$, observe that
\[
\sum_{\ell=1}^{d} s_{\ell,d}^2 c_{\ell-1,d}^2 =(s_{1,d}^2+s_{2,d}^2 c_{1,d}^2-s_{2,d}^2)+\left(s_{2,d}^2+\sum_{\ell=3}^{d} s_{\ell,d}^2 c_{\ell-1,d}^2\right).
\]
Using the definitions of $c_{\ell,d}$ and $s_{\ell,d}$, it can be verified that the first term on the right-hand side is zero. After relabeling the dummy index, we find
\[
\sum_{\ell=1}^{d} s_{\ell,d}^2 c_{\ell-1,d}^2 =s_{2,d}^2+\sum_{\ell=2}^{d-1} s_{\ell+1,d}^2 c_{\ell,d}^2.
\]
Since $s_{\ell+1,d} = s_{\ell,d-1}$ and $c_{\ell,d} = c_{\ell-1,d-1}$, this becomes 
\[
\sum_{\ell=1}^{d} s_{\ell,d}^2 c_{\ell-1,d}^2 =\sum_{\ell=1}^{d-1} s_{\ell,d-1}^2 c_{\ell-1,d-1}^2.
\]
By the inductive hypothesis, the right-hand side is equal to 1.
\end{proof}

To prove Eq.~\eqref{eq-Uket0}, it suffices to compute the first column of $U$. This requires calculating only the first column of $B = (A+I)^{-1}$. We claim that the first column of $B$ is given by
\begin{align*}
B_{11} &= \frac{1-s_1}{2}, \\
B_{\ell 1} &= -\frac{s_{\ell} , c_{\ell-1}}{2},
\end{align*}
for $\ell$ ranging from 2 to $d$. To prove this, we need to show that
\[
\sum_{\ell=1}^d (A+I)_{k \ell} B_{\ell 1} = \delta_{k 1}
\]
for $1\le k \le d$. For $k=1$, we have
\[
\sum_{\ell=1}^d (A+I)_{1 \ell} B_{\ell 1} = \frac{1-s_1}{2}-\frac{1}{2(s_1-1)}\sum_{\ell=2}^d s_\ell^2 c_{\ell -1}^2.
\]
Using Lemma~\ref{lemma1}, it follows that the right-hand side is 1. For $k > 1$, we find
\[
\sum_{\ell=1}^d (A+I)_{k \ell} B_{\ell 1} = A_{k1}B_{11}+B_{k1}.
\]
Using the definitions of $A$ and $B$, we confirm that the right-hand side is 0, completing the proof of the claim.

To compute the first column of $U$, we multiply $(A-I)$ by the first column of $B$ as follows:
\[
U_{k1}=\sum_{\ell=1}^d (A-I)_{k \ell} B_{\ell 1}.
\]
Using calculations similar to those above, we find $U_{11} = s_1$ and $U_{\ell 1} = s_{\ell} c_{\ell-1}$ for $\ell > 1$, completing the proof of Eq.~\eqref{eq-Uket0}.

\section{}\label{app:B}

In this Appendix, we show how to implement a qubit-based circuit that generates the state described in Eq.~\eqref{eq-psi}. The circuit employs $d-1$ qubits and applies the standard state-preparation method. Fig.~\ref{fig:qubit-circ} illustrates an example of a circuit implementing the 5-dimensional $U(\theta_1, \theta_2, \theta_3, \theta_4)$, which can be easily generalized for the $d$-dimensional case. The circuit uses multi-controlled $R_y$ gates, where 
\[R_y(\theta) = \left[\begin{array}{cc}
 \cos\frac{\theta}{2} &  -\sin\frac{\theta}{2} \vspace{2pt}\\
 \sin\frac{\theta}{2}   &  \,\,\,\,\cos\frac{\theta}{2}
\end{array}\right].
\]
An empty control indicates that the target gate is active when the control qubit is in the $\ket{0}$ state.

\begin{figure}[H]
\centering\includegraphics[scale=0.19]{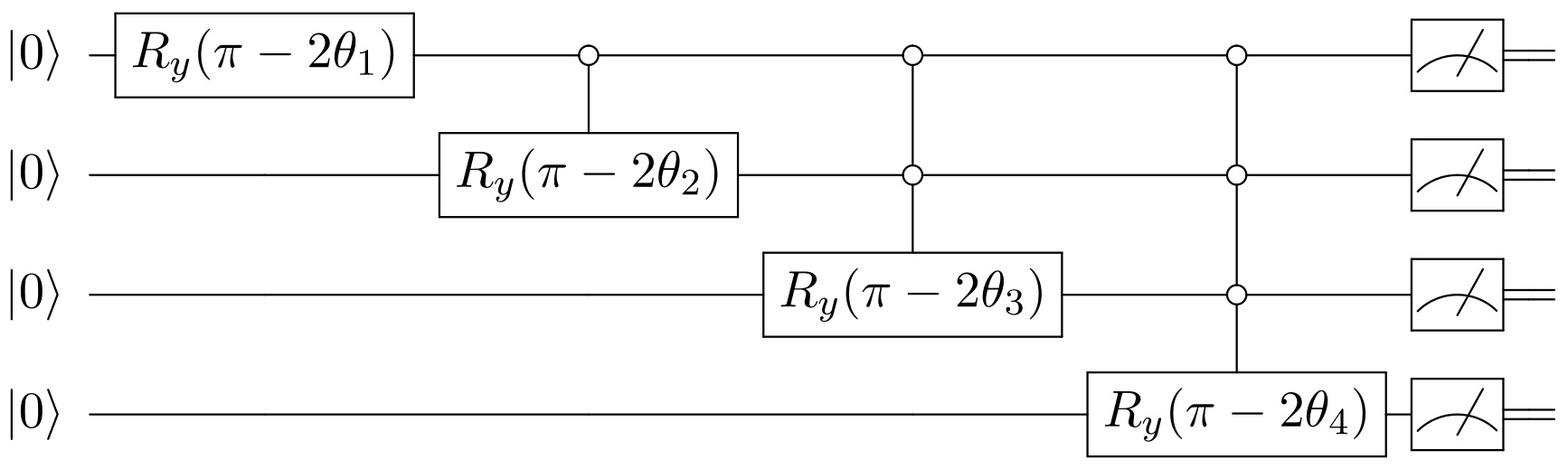}
\caption{Example of a qubit-based circuit implementing the 5-dimensional unitary $U(\theta_1, \theta_2, \theta_3, \theta_4)$ described in Sec.~\ref{sec-neuron} using multi-controlled $R_y$ gates.}
\label{fig:qubit-circ}
\end{figure}

In this setup, as shown in Fig.~\ref{fig:qubit-circ}, the first class is associated with the measurement output 0000, the second with 0001, and so on, up to the fifth class with 1000. If the quantum computer is error-free, all other outputs will have zero probability. We use more qubits than strictly necessary to implement a 5-dimensional unitary operator because the circuit is simpler, and the redundancy also serves as an error detection mechanism. Any measurement outcome other than the expected ones indicates an error, requiring the classification process to be restarted. However, a bit flip error affecting the output 00000 would go undetected, as it falls within the set of expected results.

Let us now generalize to the $d$-dimensional case. For $d=2$, the output of the circuit before measurement is $\sin \theta_1\ket{0} + \cos \theta_1\ket{1}$, as expected. To use induction, assume that for $d-1$ qubits, the output is
\begin{equation}\label{eq-psi-power2}
s_{1,d}\ket{0}+\sum_{j=1}^{d-1} s_{j+1,d}c_{j,d}\ket{2^{j-1}},
\end{equation}
where we use decimal notation inside the kets. This is a straightforward extension of Eq.~\eqref{eq-psi}. By adding an extra qubit and applying a multi-controlled $R_y(\pi-2\theta_{d})$ gate with $d-1$ empty controls, only the first term of the equation above is modified, resulting in
\[
s_{1,d}\sin \theta_d\ket{0}\otimes \ket{0}+s_{1,d}\cos \theta_d\ket{0}\otimes \ket{1}+\sum_{j=1}^{d-1} s_{j+1,d}c_{j,d}\ket{2^{j-1}}\otimes \ket{0}.
\]
After relabeling the dummy index and using that $s_{j+2,d+1} = s_{j+1,d}$ and $c_{j+1,d+1} = c_{j,d}$, we obtain
\[
s_{1,d+1}\ket{0}+\sum_{j=1}^{d} s_{j+1,d+1}c_{j,d+1}\ket{2^{j-1}}.
\]
Since this matches the form of Eq.~\eqref{eq-psi-power2} for $d+1$, the proof is complete.

By applying the decomposition techniques described in Refs.~\cite{Saeedi2013,Silva2022} for multi-controlled $R_y$ gates, the circuit implementing $U(\theta_1, \ldots, \theta_{d-1})$ with qubits achieves a depth that scales as $O(d^2)$.

\newpage


\end{document}